    \newtheorem{definition}{Definition}
    \crefname{definition}{definition}{definitions}
  \newtheorem{proposition}{Proposition}
    \crefname{proposition}{proposition}{propositions}
  \newtheorem{corollary}{Corollary}
    \crefname{corollary}{corollary}{corollary}
\newtheorem{remark}{Remark}
    \crefname{lemma}{Lemma}{Lemmas}
\definecolor{blue}{rgb}{0, 0, 1}
\definecolor{matlabblue}{rgb}{0, 0.447, 0.741}
\definecolor{red}{rgb}{1, 0, 0}
\definecolor{matlabgray}{rgb}{0.5, 0.5, 0.5}
\definecolor{matlablg}{rgb}{0.75, 0.75, 0.75}
\definecolor{matlabyellow}{rgb}{0.9290, 0.6940, 0.1250}
\definecolor{matlab_br_red}{rgb}{1, 0, 0}
\definecolor{matlab_br_blue}{rgb}{0, 0, 1}
\definecolor{matlabblack}{rgb}{0,0,0}
\definecolor{matlaborange}{rgb}{0.85, 0.325, 0.098}
\definecolor{matlabpurple}{rgb}{0.494, 0.1840 0.556}
\begin{document}
\title{\LARGE{\textbf{Toward Federated DeePC: borrowing data from similar systems}}}

\author{Gert Vankan, Simone Formentin and Valentina Breschi
\thanks{This work is partially supported by the FAIR project (NextGenerationEU, PNRR-PE-AI, M4C2, Investment 1.3), the 4DDS project (Italian Ministry of Enterprises and Made in Italy, grant F/310097/01-04/X56), and the PRIN PNRR project P2022NB77E (NextGenerationEU, CUP: D53D23016100001). It is also partly supported by the ENFIELD project (Horizon Europe, grant 101120657).}
	\thanks{Gert Vankan and Valentina Breschi are with Control Systems Group, Eindhoven University of Technology, 5612AZ Eindhoven, The Netherlands, (e-mails: \texttt{g.vankan@student.tue.nl}, \texttt{v.breschi@tue.nl}). Simone Formentin is with the Dipartimento di Elettronica, Informazione e Bioingegneria, Politecnico di Milano, 20133 Milan, Italy (e-mail: \texttt{simone.formentin@polimi.it}) } }

\maketitle

\begin{abstract}
     Data-driven predictive control approaches, in general, and Data-enabled Predictive Control (DeePC), in particular, exploit matrices of raw input/output trajectories for control design. These data are typically gathered only from the system to be controlled. Nonetheless, the increasing connectivity and inherent similarity of (mass-produced) systems have the potential to generate a considerable amount of information that can be exploited to undertake a control task. In light of this, we propose a preliminary \textit{federated} extension of DeePC that leverages a combination of input/output trajectories from multiple similar systems for predictive control. Supported by a suite of numerical examples, our analysis unveils the potential benefits of exploiting information from similar systems and its possible downsides.
\end{abstract}

\begin{keywords}
    Data-driven control; Predictive control; federated learning.
\end{keywords}

\section{Introduction}\label{sec:intro}
Direct data-driven control methods have experienced a surge in popularity over the last years as they allow for bypassing the (arguably) most expensive and time-consuming task in advanced control design, i.e., modeling \cite{ogunnaike1996} (see, e.g., \cite{Dorfler2023a,Dorfler2023b} for discussions on recent advances on data-driven control, and \cite{Formentin2013} for a comparison between model-based and data-driven approaches). It is thus clear that the quality of the controller devised with these techniques heavily hinges on the data gathered to design it. Hence, given the central role played by data in data-driven control approaches and the costs associated with experimenting on modern systems, understanding how to efficiently and effectively use the data at disposal for control design becomes a necessity. 

Among existing data-driven control approaches, data-enabled predictive control (DeePC) \cite{coulsonDataEnabledPredictiveControl2019} is the first one using this data-driven control rationale for predictive control,  promising a relatively simple and fast method to design a predictive controller from raw data. Nonetheless, this method, as well as the subsequent ones proposed in the literature (see, e.g., \cite{coulsonDataEnabledPredictiveControl2019,berberich2020data,Chiuso2025} and the review in~\cite{verheijen2023handbook}, just to mention a few), rely on a set of batch data collected only from the system to be controlled. Instead, in real-world applications, one could often access data from similar systems, yet not identical, to the one under control (e.g., when considering mass-produced devices). These data can become an asset for control design, helping improve the controller's performance. Thus, it is essential to understand how to leverage them effectively. 

Understanding how to use data from multiple sources to tackle control problems is a recent yet not a new topic within control research (see the survey in~\cite{Weber2024}), with strategies to exploit data from multiple systems proposed for system identification (see, e.g., ~\cite{Smeraldo2023,Azzollini2023,Breschi2020}) and (model-based) control design~\cite{breschi2020collaborative,ferrarotti2021CloudSharing,Zeng2022}.  Nonetheless, integrating data from multiple similar systems into the context of data-driven (or data-enabled) predictive control remains unexplored.
\paragraph*{Contribution} By borrowing the term ``Federated" from machine learning \cite{konecnyFederatedOptimizationDistributed2015}, our goal in this work is to begin analyzing how to exploit data coming from multiple (similar) systems to design a predictive controller for a target (or nominal) system and whether their use can be advantageous. To this end, we propose using the raw data from multiple sources directly in the definition of the data-based predictor without increasing the dimension of the optimization problem to be solved. This choice translates into considering a weighted average of the available data, with weights computed a priori by heuristically evaluating the similarity between the controlled plant (i.e., the nominal system) and the others based on the available data. We then provide an analytical analysis of the impact of our design choices on the statistical features of the data used for control. Through a numerical example, we show that using data from multiple sources might be a viable option to enhance tracking performance with noisy data at the price of an increased sensitivity to regularization penalties (often used in data-driven control to cope with noisy data~\cite{berberich2020data,Dorfler2023gap,Coulson2022}). 

\paragraph*{Outline} The paper is organized as follows. We introduce the considered setting and the goal of this work in Section~\ref{sec:setting}, while Section~\ref{sec:preliminaries} provides some preliminaries on DeePC. Then, Section~\ref{sec:federated_DeePC} presents the proposed federated DeePC approach, discussing the impact of our choices on the statistics of the data used for prediction. The impact of using the federated approach is then discussed in Section~\ref{sec:examples}. The paper ends with some concluding remarks and directions for future works.  

\paragraph*{Notation \& useful definitions} We denote the set of natural numbers (including zero) and real numbers as $\mathbb{N}$ ($\mathbb{N}_{0}$) and $\mathbb{R}$, respectively. Meanwhile, $\mathbb{R}^{n}$ and $\mathbb{R}^{n \times m}$ denote the space of real, $n$-dimensional column vector and the set of real, $n\times m$ dimension matrices, respectively. We indicate identity matrices as $I$ and zero vectors as $0$, without explicitly indicating their dimensions. For a matrix $A \in \mathbb{R}^{n\times m}$, $\|A\|_{2}$ denotes the induced 2-norm of matrix $A$, while $\|A\|_{2}^{2}=A^{\top}A$, with $A^{\top} \in \mathbb{R}^{m\times n}$ is the transpose of $A$. Given two integers $a<b\leq n$, $[A]_{a:b}$ represents the set of rows of $A$ from the $a$-th to the $b$-th. Given a vector $x \in \mathbb{R}^{n_x}$ and a matrix $Q\in \mathbb{R}^{n_x \times n_x}$, $\|x\|_{Q}^{2}=x^{\top}Qx$. We denote $w_k \in \mathbb{R}^{n_w}$ as the instance of a time-series at time $k \in \mathbb{N}_{0}$, while we compactly denote the column vector stacking the elements of the sequence $\{w_0,w_1,\ldots,w_{T-1}\}$ as $w_{[0,T-1]}$, namely
\begin{equation*}
    w_{[0,T-1]}=\begin{bmatrix} w_0^\top & w_1^\top &\cdots & w_{T-1}^\top \end{bmatrix}^\top,
\end{equation*}
and indicate the associated Hankel matrix with $L\leq T$ rows as
\begin{equation}\label{eq:hankel}
    \mathcal{H}_{L}(w_{[0,T-1]})=\begin{bmatrix}
        w_0 & w_1 & \cdots & w_{T-L}\\
        w_1 & w_2 & \cdots & w_{T-L+1}\\
        \vdots & \vdots & \ddots & \vdots\\
        w_{L-1} & w_{L} & \cdots & w_{T-1}
    \end{bmatrix}.
\end{equation}
We then say that $w_{[0,T-1]}$ is persistently exciting of $L$ is the following definition holds.
\begin{definition}[Persistency of excitation]
    A finite-length signal $w_{[0,T-1]}$ is persistently exciting of order $L$ if $\mathcal{H}_{L}(w_{[0,T-1]})$ defined in \eqref{eq:hankel} is full row rank, i.e., $
        \mathrm{rank}(\mathcal{H}_{L}(w_{[0,T-1]}))=n_wL$.
    \hfill $\square$
\end{definition}
Given a linear, time-invariant system described by the following state-space model
\begin{equation*}
\begin{aligned}
x_{t+1}&=Ax_{t}+Bu_{t},\\
y_{t}&=Cx_{t}+Du_{t},
\end{aligned}
\end{equation*}
with $x_{t}\in \mathbb{R}^{n_x}$, $u_{t} \in \mathbb{R}^{n_u}$ and $y_{t} \in \mathbb{R}^{n_y}$, we define its extended observability matrix and the Toeplitz formed by its Markov parameters as follows:
\begin{equation}\label{eq:useful_matrices}
  \Gamma\!=\!\!\begin{bmatrix}
      C\\
      CA\vspace{-.1cm}\\
      \vdots\\
      CA^{T-1}
\end{bmatrix}\!,~\mathcal{T}\!\!=\!\!\begin{bmatrix}
D & 0  & \cdots & 0\\
CB & D  & \cdots & 0\vspace{-.1cm}\\
\vdots & \vdots & \ddots & \vdots \\
CA^{T\!-2}B & CA^{T\!-3}B & \cdots & D
  \end{bmatrix}\!.  
\end{equation}
Given a random vector $v \in \mathbb{R}^{n_v}$ we indicate its expected value as $\mathbb{E}[v] \in \mathbb{R}^{n_v}$.
\section{Setting \& goal}\label{sec:setting}
Consider a \emph{controllable}, linear, time-invariant (LTI) system $\mathcal{S}_{\mathrm{o}}$, with dynamics described by the following relationship
\begin{equation}\label{eq:dynamics}
\bar{y}_{t}=C_{\mathrm{o}}A_{\mathrm{o}}^{t}x_0+\sum_{\tau=0}^{t-1}C_{\mathrm{o}}A_{\mathrm{o}}^{\tau}B_{\mathrm{o}}u_{t-\tau-1}+D_{\mathrm{o}}u_{t},
\end{equation} 
where $\bar{y}_{t} \in\mathbb{R}^{n_y}$ is the (noiseless) output of the system at time $t \in \mathbb{N}_0$, $u_{t} \in \mathbb{R}^{n_u}$ is the associated input, $x_{0} \in \mathbb{R}^{n_x}$ is the initial state of the systems, while $(A_{\mathrm{o}},B_{\mathrm{o}},C_{\mathrm{o}},D_{\mathrm{o}})$ are the matrices characterizing a (minimal) state-space model for $\mathcal{S}_{\mathrm{o}}$ and, hence, its dynamic evolution. 

Let us assume that the latter matrices are \emph{unknown},  but we have access to a dataset $\mathcal{D}=\{\mathcal{D}_{i}\}_{i=0}^{M-1}$. Let such dataset consist of $M$ subsets $\mathcal{D}_{i}$ of input/output data, i.e., $\mathcal{D}_{i}=\{u_{[0,T-1]}^{d,(i)},y_{[0,T-1]}^{d,(i)}\}_{i=0}^{M-1}$, collected from both the \textquotedblleft nominal\textquotedblright \ system as well as other $M-1 \geq 1$ systems that are similar to the nominal one according to the following definition.
\begin{definition}[Similar systems]\label{def:similarity}
    Let $\Gamma_{1} \in \mathbb{R}^{n_yT\times n_x}$ and $\mathcal{T}_{1} \in \mathbb{R}^{n_yT\times n_uT}$ be the extended observability matrix and the Toeplitz matrix of the Markov parameters of an system $\mathcal{S}_{1}$ (see \eqref{eq:useful_matrices}). Consider another LTI system $\mathcal{S}_2$, sharing the same initial condition $x_{0}$ and input sequence $u_{[0,T-1]}$. Under these conditions, the LTI system $\mathcal{S}_2$ is said to be \emph{similar} to $\mathcal{S}_{1}$ if their noiseless outputs satisfy \begin{align}\label{eq:similarity_def}
        \nonumber \bar{y}_{[0,T-1]}^{(2)}&=(\Gamma_{1}+\Delta\Gamma_{21})x_{0}+\!(\mathcal{T}_{1}+\!\Delta\mathcal{T}_{21})u_{[0,T-1]}\\
        &=\bar{y}_{[0,T-1]}^{(1)}+\Delta y_{[0,T-1]}^{(2,1)},
    \end{align}
    where \textcolor{black}{$\Delta \Gamma_{21} = \Gamma_{2} - \Gamma_{1}$, $\Delta \mathcal{T}_{21} = \mathcal{T}_{2} - \mathcal{T}_{1}$, and}
    \begin{equation}
        \Delta y_{[0,T-1]}^{(2,1)}=\Delta\Gamma_{21}x_{0}+\Delta\mathcal{T}_{21}u_{[0,T-1]},
    \end{equation}
   with $\|\Delta y_{[0,T-1]}^{(2,1)}\|_{2}\leq \varepsilon_{y}$, 
   and 
   $\varepsilon_{y} \geq 0$ being sufficiently small, 
   while the two systems 
   retain the same properties (namely, controllability, observability and stability). \hfill $\square$
\end{definition} 

Based on this definition, suppose that the available data are collected when all the considered data-generating systems start from the same initial condition while being fed with the same input sequence, i.e., $u_{[0,T-1]}^{d,(i)}=u_{[0,T-1]}^{d}$ for all $i=0,\ldots,M-1$. \textcolor{black}{Note that, the data associated with the nominal system are always indicated via $i=0$.} Moreover, let us assume that the outputs collected from them are corrupted by noise, i.e.,
\begin{equation}\label{eq:noisy_output}
    y_{[0,T-1]}^{d,(i)}=\bar{y}_{[0,T-1]}^{d,(i)}+e_{[0,T-1]}^{(i)}, 
\end{equation}
where $e_t^{(i)} \in \mathbb{R}^{n_y}$ is the realization of the \emph{zero-mean}, \emph{white} measurement noise with covariance $\sigma_{i}^2 I$ acting on the $i$-th system, for $i=0,\ldots,M-1$. 

Under the additional assumption that the input sequence used for data collection is persistently exciting of a sufficient order (specified in Section~\ref{sec:preliminaries}), our goal is to design a data-driven predictive controller for the nominal systems $\mathcal{S}_{\mathrm{o}}$ by leveraging all the information at our disposal, ultimately searching for a preliminary answer to the following question: \emph{Can data from similar systems be leveraged to improve the performance of a data-driven predictive controller?} 

\begin{remark}[On our definition of similarity]
According to Definition~\ref{def:similarity}, two systems are similar if the differences in their dynamic response are dictated by bounded (and \textquotedblleft sufficiently small\textquotedblright) parametric uncertainties. Future work will be devoted to extending this definition, as well as considering alternative definitions of (input/output) similarity (see, e.g.,~\cite{fazziDistanceProblemsBehavioral2023,wangLearningControlSimilarity2024}). \hfill $\square$  
\end{remark}

\section{Preliminaries}\label{sec:preliminaries}
Before introducing our \textquotedblleft federated\textquotedblright \ strategy for data-driven predictive control, we provide a set of preliminaries by drawing from the seminal works in \cite{coulson2019data,berberich2020data}. Both of them rely on Willems' fundamental lemma~\cite{BehavioralSystemsTheoryWillems1997} to replace the predictor used in Model Predictive Control (MPC)~\cite{rawlings2009MPC} with one relying on raw data only. To this end, they rest on the assumption that the system under control is indeed controllable, and that the designer has access only to an input/output trajectory of the system $\{u_{[0,T-1]}^{d},y_{[0,T-1]}^{d}\}$. Such a trajectory is obtained by feeding the controlled system with a persistently exciting input of an order greater than $T_{\mathrm{ini}}+N+n_{x}$, with $n_x$ being the order of the unknown system, $N$ the (pre-fixed) prediction horizon of the predictive control scheme, and $T_{\mathrm{ini}}$ dictating the amount of past inputs and outputs needed to define the initial condition of the predictive control scheme for all time instant $t \in \mathbb{N}_0$ as follows:
\begin{equation}\label{eq:initial_condition}
u_{\mathrm{ini},t}\!=\!\begin{bmatrix}
       u_{t-T_{\mathrm{ini}}} \!\!\!\!\!& \cdots &\!\!\!\! u_{t-1}
    \end{bmatrix},~y_{\mathrm{ini},t}\!=\!\begin{bmatrix}
       y_{t-T_{\mathrm{ini}}} \!\!\!\!\!& \cdots &\!\!\!\! y_{t-1}
    \end{bmatrix}.
\end{equation}
This further implies that the length of the input/output trajectory $T\geq (n_u+1)(T_{\mathrm{ini}}+N+n_{x})$.
The available data are used to construct the following Hankel matrices 
\begin{subequations} \label{eq:Up_Yp_Matrices}
\begin{align}
    &U_P\!=\![\mathcal{H}_L(u_{[0,T-1]}^d)]_{1:T_{\mathrm{ini}}}~U_F\!=\![\mathcal{H}_{L}(u_{[0,T-1]}^d)]_{T_{\mathrm{ini}}+1:L} \\
    &Y_P\!=\![\mathcal{H}_{L}(y_{[0,T-1]}^d)]_{1:T_{ini}}~Y_F\!=\![\mathcal{H}_{L}(y_{[0,T-1]}^d)]_{T_{\mathrm{ini}}+1:L} \end{align}
\end{subequations}
with $L=T-T_{\mathrm{ini}}+N$, that, according to~\cite{BehavioralSystemsTheoryWillems1997}, are used to construct the data-driven predictor as
\begin{equation}\label{eq:data_driven_predictor}
    \begin{bmatrix}
        U_{P}\\
        Y_{P}\\
        U_{F}\\
        Y_F
        \end{bmatrix}g_t=\begin{bmatrix}
        u_{\mathrm{ini},t}\\
        y_{\mathrm{ini},t}\\
        u_{f,t}\\
        y_{f,t}
    \end{bmatrix},~~~~ t\in \mathbb{N}_0.
\end{equation}
with $g_t \in \mathbb{R}^{T-T_{\mathrm{ini}}-N}$, $u_{f,t}=u_{[t,t+N-1]}$ and $y_{f,t}=y_{[t,t+N-1]}$. The latter allows for the definition of the data-enabled predictive control (DeePC) problem, namely
\begin{subequations}\label{eq:standard_DeePC}
\begin{align}
    & 
    \min_{\substack{u_{f,t},~y_{f,t}\\g_t}}~J(u_{f,t},y_{f,t})+\lambda_g\|g_t\|_2^2\\
    & ~~~~~ \mbox{s.t.}~~\begin{bmatrix}
        U_{P}\\
        Y_{P}\\
        U_{F}\\
        Y_F
        \end{bmatrix}g_t=\begin{bmatrix}
        u_{\mathrm{ini},t}\\
        y_{\mathrm{ini},t}\\
        u_{f,t}\\
        y_{f,t}
    \end{bmatrix},\label{eq:prediction_model}\\
    &\quad \qquad~~ u_{t+k} \in \mathcal{U},~~k \in [0,N-1],\\
    &\quad \qquad~~ y_{t+k} \in \mathcal{Y},~~k \in [0,N-1],
\end{align}    
with 
\begin{equation}\label{eq:cost_perf}
    J(u_{f,t},y_{f,t})=\!\!\sum_{k=0}^{N-1} \|y_{t+k}-y_{t+k}^{r}\|_{Q}^{2}+\|u_{t+k}-u_{t+k}^{r}\|_{R}^{2},
\end{equation}
\end{subequations}
which is solved at each time instant $t \in \mathbb{N}_0$ in a receding horizon fashion.

The cost of the DeePC problem depends on $y_{[t,t+N-1]}^{r} \in \mathbb{R}^{n_y}$ and $u_{[t,t+N-1]}^{r} \in \mathbb{R}^{n_u}$, i.e., the output and input references the designer aims the system to track in closed-loop over the prediction horizon, as well as $Q \in \mathbb{R}^{n_y \times n_y}$ and $R \in \mathbb{R}^{n_u \times n_u}$, that are user-defined positive definite matrices\textcolor{black}{, which respectively penalize tracking error and control effort in the predicted trajectory}. Differently from standard MPC, the cost further features a regularization term \footnote{Additional slack variables can be used to consider the noise on $y_{\mathrm{ini},t}$, $Y_P$ and $Y_F$, that we do not include in this work. Note that, the noise on $y_{\mathrm{ini},t}$ can be counteracted with a proper choice of $T_{\mathrm{ini}}$ (see~\cite[Remark 2]{BRESCHI2023110961}).} that, contingent on the user-defined penalty $\lambda_g \geq 0$, contrasts the effect of noise on the available data. Moreover, the problem depends on 
$\mathcal{U} \subseteq \mathbb{R}^{n_u}$ and $\mathcal{Y} \subseteq \mathbb{R}^{n_y}$, which characterize the input and output constraints the designer aims to enforce.
\section{Toward Federated DeePC}\label{sec:federated_DeePC}
Instead of relying only on the data coming from the system to be controlled as in \eqref{eq:standard_DeePC}, we now wish to leverage all the information available in our dataset $\mathcal{D}$ (see Section~\ref{sec:setting}). Although an option could be to construct a bigger Hankel matrix by concatenating the Hankels associated with the data in each of the available subsets $\mathcal{D}_{i}$, $i=0,\ldots,M-1$, this approach is likely to make the resulting DeePC scheme numerically intractable as the optimization variable $g_t$ (see \eqref{eq:data_driven_predictor}) has a dimension that scales linearly with the number of data.

As a preliminary step towards federated DeePC, we propose to replace the predictor in \eqref{eq:prediction_model} with a \textquotedblleft federated\textquotedblright \ one, defined as
\begin{equation}\label{eq:federated_predictor}
    \sum_{i=0}^{M-1}\alpha^{(i)\!\!}\begin{bmatrix}
        U_P\\
        Y_P^{(i)}\\
        U_{F}\\
        Y_F^{(i)}
    \end{bmatrix}g_t\!=\!\begin{bmatrix}
        U_P\\
        Y_P^{\mathrm{fed}}\\
        U_{F}\\
        Y_F^{\mathrm{fed}}
    \end{bmatrix}g_t\!=\!\begin{bmatrix}
        u_{\mathrm{ini},t}\\
        y_{\mathrm{ini},t}\\
        u_{f,t}\\
        y_{f,t}
    \end{bmatrix}\!,
\end{equation}
where $\alpha^{(i)} \in [0,1]$, for $i=0,\ldots,M-1$, are user-defined weights satisfying 
\begin{equation}\label{eq:sum1}
    \sum_{i=1}^{M-1}\alpha^{(i)}=1,
\end{equation}
and $Y_{P}^{(i)}$ and $Y_{F}^{(i)}$ are 


obtained by decomposing the Hankel matrices of the outputs available in our dataset. In this work, we heuristically define the weights characterizing \eqref{eq:federated_predictor} as 
\begin{subequations}
\label{eq:penalties_emp}
\begin{equation}\label{eq:euristic_choice}
    \alpha^{(i)}=\left[\sum_{j=0}^{M-1}\exp{\left(-\beta\Delta \mathcal{H}_{L}^{(j)}\right)}\right]^{-1}\exp\left(-\beta \Delta \mathcal{H}_{L}^{(i)}\right),
\end{equation}
where 
\begin{equation} \label{eq:2norm-difference}
    \Delta \mathcal{H}_{L}^{(i)}=\left\|\mathcal{H}_{L}\left(y_{[0,T-1]}^{d,(i)}\right)-\mathcal{H}_{L}\left(y_{[0,T-1]}^{d,(0)}\right)\right\|_{2}
\end{equation}
\end{subequations}
for $i=0,\ldots,M-1$, and $\beta \geq 0$ is a tunable parameter. According to this definition, we thus tend to prioritize (i.e., weight more) the data collected from the nominal system and those that are \textquotedblleft closer\textquotedblright \ (according to the induced 2-norm) to the data generated by $\mathcal{S}_{\mathrm{o}}$. At the same time, by varying the value of the tunable parameter $\beta$, we can recover the original DeePC algorithm (for $\beta \rightarrow \infty$) or consider Hankel matrices comprising the sampled mean of the available outputs (when $\beta=0$). 

\textcolor{black}{The role of these weights as defined in \eqref{eq:euristic_choice} is to create a weighted average of the available data, where 
datasets that are more ``similar" to the nominal one are weighted the most. We acknowledge that our choice of 
weighting 
is dependent on initial conditions and specific input sequences, which might limit its applicability. Moreover, we also acknowledge that the weighting in \eqref{eq:euristic_choice} is affected by noise on the observed outputs. Nonetheless, the impact of noise on the weights is likely to be negligible as long as differences between systems dominate the effect of noise on the 2-norm in \eqref{eq:2norm-difference}. Nonetheless, the proposed 
heuristic measure of similarity is a stepping stone 
to show potential in weighting the separate data-sets differently, and future research will work on refining and formalizing the weightings of these matrices even further.
}

\begin{algorithm}[!tb]
\caption{Federated DeePC}
\label{algo:1}
\begin{algorithmic}[1] 
    \REQUIRE $\mathcal{D}$, $\lambda_g$, $Q$, $R$, $(u_{t}^{r},y_t^r)$ for all $t \in \mathbb{N}_0$
    \STATE \textbf{Compute} $\{\alpha^{(i)}\}_{i=0}^{M-1}$ as in \eqref{eq:euristic_choice};
    \STATE \textbf{Construct} the federated outputs as in \eqref{eq:federated_outputs};
    \FOR{$t=0,1\ldots T_{sim}$}
    \STATE \textbf{Construct} $u_{\mathrm{ini},t}$ and $y_{\mathrm{ini},t}$ as in \eqref{eq:initial_condition};
    \STATE \textbf{Solve} \eqref{eq:standard_DeePC} with the federated predictor in \eqref{eq:federated_predictor};
    \STATE \textbf{Get} its optimal solution $g_t^{\star}$; 
    \STATE \textbf{Find} the optimal input $u_t^\star=[U_F]_{1:n_u}g_t^{\star}$;
    \STATE \textbf{Feed} $u_t^\star$ to $\mathcal{S}_{\mathrm{o}}$;
    \ENDFOR
\end{algorithmic}
\end{algorithm}

By relying on the definition of the federated predictor in \eqref{eq:federated_predictor}, with weights defined as in \eqref{eq:euristic_choice}, our proposal for a federated DeePC scheme is summarized in Algorithm~\ref{algo:1}. Note that, since the construction of the federated predictor takes place before closing the loop, federated DeePC shares the same computational complexity of the standard DeePC scheme.  

\subsection{The impact of exploiting similarities in DeePC}
We now analyze how the definition of the federated predictor in \eqref{eq:federated_predictor} impacts the statistical properties of the data we use in the DeePC scheme, allowing us to unveil the possible benefits and drawbacks of using data from multiple systems as proposed. To this end, according to Definition~\ref{def:similarity}, we recast the measured outputs as follows
\begin{equation}\label{eq:meas_output}
    y_{[0,T-1]}^{d,(i)}=\bar{y}_{[0,T-1]}^{d,(0)}+\Delta y_{[0,T-1]}^{d,(i,0)}+e_{[0,T-1]}^{(i)},
\end{equation}
where $\bar{y}_{[0,T-1]}^{d,(0)}$ is the noiseless output of the nominal system, while 
\begin{equation}\label{eq:discrepancy_bounds}
 \|\Delta y_{[0,T-1]}^{d,(i,0)}\|_{2}\leq \varepsilon_{y}^{(i)}  \leq 
 \varepsilon_{y}, 
\end{equation}
for all $i=0,\ldots,M-1$, with $\varepsilon_{y}<\infty$. Accordingly, the data comprised in $Y_{P}^{\mathrm{fed}}$ and $Y_{F}^{\mathrm{fed}}$ are ultimately given by:
\begin{align}\label{eq:federated_outputs}
    \nonumber &y_{[0,T-1]}^{d,\mathrm{fed}}=\sum_{i=0}^{M-1}\alpha^{(i)}y_{[0,T-1]}^{d,(i)}\\
    &~=\bar{y}_{[0,T-1]}^{d,(0)}\!+\!\!\sum_{i=1}^{M-1}\!\alpha^{(i)}\Delta y_{[0,T-1]}^{d,(i,0)}\!+\!\!\sum_{i=0}^{M-1}\!\alpha^{(i)} e_{[0,T-1]}^{(i)},
\end{align}
where the first term of the last equality is obtained thanks to \eqref{eq:sum1}. Accordingly, by relying on the fact that the measurement noises are all assumed to be zero mean, it is straightforward to prove that
\begin{equation}\label{eq:mean_fed}
    \mathbb{E}[y_{[0,T-1]}^{d,\mathrm{fed}}]=\bar{y}_{[0,T-1]}^{d,(0)}\!+\!\!\sum_{i=1}^{M-1}\!\alpha^{(i)}\Delta y_{[0,T-1]}^{d,(i,0)},
\end{equation}
by relying on the properties of the expected value and the fact that $\alpha^{(i)}$ is deterministic for all $i=0,\ldots,M-1$. This straightforward result allows us to formalize the following bound on the distance between the expected value of the federated data and the noiseless output of the nominal system.
\begin{proposition}
    The federated output sequence in~\eqref{eq:federated_outputs} satisfies
    \begin{equation}\label{eq:bound_mean}
        \left\|\mathbb{E}[y_{[0,T-1]}^{d,\mathrm{fed}}]-\bar{y}_{[0,T-1]}^{d,(0)}\right\|_{2}\leq \sum_{i=1}^{M-1}\alpha^{(i)}\varepsilon^{(i)},
    \end{equation}
    with $\varepsilon^{(i)}$ defined as in \eqref{eq:discrepancy_bounds}, for $i=0,\ldots,M-1$.
\end{proposition}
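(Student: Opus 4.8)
The plan is to start directly from the closed-form expression for the expected value of the federated output given in \eqref{eq:mean_fed}, which was already derived by invoking the zero-mean property of the measurement noise together with the determinism of the weights $\alpha^{(i)}$. Subtracting the noiseless nominal output $\bar{y}_{[0,T-1]}^{d,(0)}$ from both sides of \eqref{eq:mean_fed} immediately isolates the residual as a weighted sum of the discrepancy terms:
\begin{equation*}
    \mathbb{E}[y_{[0,T-1]}^{d,\mathrm{fed}}]-\bar{y}_{[0,T-1]}^{d,(0)}=\sum_{i=1}^{M-1}\alpha^{(i)}\Delta y_{[0,T-1]}^{d,(i,0)}.
\end{equation*}

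Next I would take the induced 2-norm of both sides and apply the triangle inequality (subadditivity of the norm) to the finite sum, followed by absolute homogeneity to extract each scalar weight, giving
\begin{equation*}
    \left\|\sum_{i=1}^{M-1}\alpha^{(i)}\Delta y_{[0,T-1]}^{d,(i,0)}\right\|_{2}\leq \sum_{i=1}^{M-1}\lvert\alpha^{(i)}\rvert\,\|\Delta y_{[0,T-1]}^{d,(i,0)}\|_{2}.
\end{equation*}
At this point I would use that each weight $\alpha^{(i)}\in[0,1]$ is nonnegative, so $\lvert\alpha^{(i)}\rvert=\alpha^{(i)}$, and then invoke the per-system discrepancy bound $\|\Delta y_{[0,T-1]}^{d,(i,0)}\|_{2}\leq \varepsilon_{y}^{(i)}$ from \eqref{eq:discrepancy_bounds} term by term. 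Substituting these two facts into the right-hand side yields exactly the claimed bound.

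I do not anticipate any genuine obstacle here: the result is a direct consequence of the triangle inequality and the nonnegativity of the weights, with all the probabilistic content already absorbed into the expression \eqref{eq:mean_fed}. The only points requiring care are bookkeeping ones, namely ensuring the summation index is handled consistently (the sum runs from $i=1$ to $M-1$, since the $i=0$ discrepancy term vanishes by construction) and reconciling the notation $\varepsilon^{(i)}$ appearing in the statement with $\varepsilon_{y}^{(i)}$ in \eqref{eq:discrepancy_bounds}, which I take to denote the same per-system bound. One could alternatively relax each $\varepsilon_{y}^{(i)}$ to the uniform bound $\varepsilon_{y}$ and exploit the weight normalization to obtain the cruder estimate $\varepsilon_{y}$, but the stated per-system form is the sharper and more informative one, so I would retain it.
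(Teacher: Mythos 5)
Your proposal is correct and follows essentially the same route as the paper's own proof: substituting the expression \eqref{eq:mean_fed} for the mean, applying the triangle inequality together with the nonnegativity of the weights $\alpha^{(i)}$, and then invoking the per-system bounds in \eqref{eq:discrepancy_bounds} term by term (your observation that the $i=0$ term vanishes mirrors the paper's remark that $\Delta y_{[0,T-1]}^{d,(0,0)}=0$). Your handling of the notational discrepancy between $\varepsilon^{(i)}$ and $\varepsilon_{y}^{(i)}$ is also the right reading; these denote the same quantity.
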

\begin{proof}
    By replacing the definition of $\mathbb{E}[y_{[0,T-1]}^{d,\mathrm{fed}}]$ into the left-hand-side of \eqref{eq:bound_mean} it easily follows that 
    \begin{align*}
        \left\|\mathbb{E}[y_{[0,T-1]}^{d,\mathrm{fed}}]-\bar{y}_{[0,T-1]}^{d,(0)}\right\|_2&=\left\|\sum_{i=1}^{M-1}\!\alpha^{(i)}\Delta y_{[0,T-1]}^{d,(i,0)}\right\|_2\\
        & \leq \sum_{i=1}^{M-1}\!\alpha^{(i)}\left\|\Delta y_{[0,T-1]}^{d,(i,0)}\right\|_2,
    \end{align*}
    where the first equality stems from the fact that $\Delta y_{[0,T-1]}^{d,(0,0)}=0$, while the second inequality is obtained by using the triangular inequality and the properties of the weights $\{\alpha^{(i)}\}_{i=1}^{M}$. By using \eqref{eq:discrepancy_bounds} then \eqref{eq:bound_mean} immediately follows, thus concluding the proof. \end{proof}

This result ultimately indicates that the Euclidean distance between the expected value of the federated output trajectory and the noiseless one of the nominal system is driven by the dissimilarity between the (noiseless) output trajectories as well as the user-defined weights $\{\alpha^{(i)}\}_{i=1}^{M}$. To contain such distance, the latter penalties should thus be chosen to prioritize data collected from systems that are more similar to the nominal one, while neglecting the others (as we do with the penalties defined in \eqref{eq:penalties_emp}). 

Note that the bound in \eqref{eq:bound_mean} can be extended to the actual federated output trajectory (and not its mean) for the specific case in which $\beta$ in \eqref{eq:penalties_emp} is zero, as formalized in the following corollary.
\begin{corollary}[An asymptotic result with $\beta=0$]
    Assume that the measurement noises acting on the $M$ available data-generating systems are mutually independent and identically distributed. Let the penalties used to construct the federated predictor in \eqref{eq:federated_predictor} be defined as in \eqref{eq:penalties_emp}, with $\beta=0$. Then, for $M\rightarrow \infty$, the following holds:
    \begin{equation}\label{eq:bound_fed_asymp}
        \left\|y_{[0,T-1]}^{d,\mathrm{fed}}-\bar{y}_{[0,T-1]}^{d,(0)}\right\|_{2}\leq \frac{1}{M}\sum_{i=1}^{M-1}\varepsilon^{(i)},
    \end{equation}
    with $\varepsilon^{(i)}$ defined as in \eqref{eq:discrepancy_bounds}, for $i=0,\ldots,M-1$.
\end{corollary}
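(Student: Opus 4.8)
The plan is to exploit the fact that setting $\beta=0$ in \eqref{eq:penalties_emp} collapses the weights to the uniform value $\alpha^{(i)}=1/M$ for every $i$, since each exponential in \eqref{eq:euristic_choice} evaluates to $\exp(0)=1$. Substituting this into the already-derived expression \eqref{eq:federated_outputs} for the federated output, I would write the deviation from the nominal noiseless trajectory as the sum of two pieces, namely
\begin{equation*}
    y_{[0,T-1]}^{d,\mathrm{fed}}-\bar{y}_{[0,T-1]}^{d,(0)}=\frac{1}{M}\sum_{i=1}^{M-1}\Delta y_{[0,T-1]}^{d,(i,0)}+\frac{1}{M}\sum_{i=0}^{M-1} e_{[0,T-1]}^{(i)},
\end{equation*}
where the first (deterministic) term is the similarity bias and the second (stochastic) term is the averaged measurement noise.

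The crux of the argument is the treatment of the noise term. Here I would invoke the hypothesis of the corollary, i.e., that the noises $\{e_{[0,T-1]}^{(i)}\}$ are mutually independent, identically distributed, and zero-mean (see \eqref{eq:noisy_output}). By the strong law of large numbers, the empirical average $\frac{1}{M}\sum_{i=0}^{M-1} e_{[0,T-1]}^{(i)}$ converges (almost surely, component-wise) to its expectation $\mathbb{E}[e_{[0,T-1]}]=0$ as $M\rightarrow\infty$. Consequently, in the asymptotic regime the stochastic contribution to the deviation vanishes, leaving only the bias term $\frac{1}{M}\sum_{i=1}^{M-1}\Delta y_{[0,T-1]}^{d,(i,0)}$.

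Having disposed of the noise, the remainder is a routine deterministic estimate: applying the triangle inequality to the bias sum and then the per-system bounds $\|\Delta y_{[0,T-1]}^{d,(i,0)}\|_2\le\varepsilon^{(i)}$ from \eqref{eq:discrepancy_bounds} yields
\begin{equation*}
    \left\|\frac{1}{M}\sum_{i=1}^{M-1}\Delta y_{[0,T-1]}^{d,(i,0)}\right\|_2\le\frac{1}{M}\sum_{i=1}^{M-1}\left\|\Delta y_{[0,T-1]}^{d,(i,0)}\right\|_2\le\frac{1}{M}\sum_{i=1}^{M-1}\varepsilon^{(i)},
\end{equation*}
which is exactly \eqref{eq:bound_fed_asymp}. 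This step mirrors the proof of the preceding proposition almost verbatim, the only difference being the explicit uniform weight $1/M$ in place of the generic $\alpha^{(i)}$.

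I expect the main obstacle to be conceptual rather than computational: precisely formalizing what \textquotedblleft for $M\rightarrow\infty$, the following holds\textquotedblright \ means, given that the right-hand side of \eqref{eq:bound_fed_asymp} still depends explicitly on $M$. The cleanest reading, which I would adopt, is that the stochastic noise term is driven to zero in the limit by the law of large numbers, so that the (per-$M$) deterministic bias bound becomes the governing estimate on the deviation; one must be careful to state that the convergence of the noise average holds in the almost-sure (or in-probability) sense, since the underlying quantities are random. Making this probabilistic qualifier explicit—rather than treating \eqref{eq:bound_fed_asymp} as a deterministic inequality valid for every realization and every $M$—is the delicate point of the argument.
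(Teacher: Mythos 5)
Your proposal is correct and follows essentially the same route as the paper's own proof: setting $\beta=0$ in \eqref{eq:penalties_emp} yields the uniform weights $\alpha^{(i)}=1/M$, the law of large numbers (under the i.i.d., zero-mean noise assumption) eliminates the averaged noise term as $M\rightarrow\infty$, and the triangle inequality together with \eqref{eq:discrepancy_bounds} bounds the residual dissimilarity term, giving \eqref{eq:bound_fed_asymp}. Your closing remark making the almost-sure (or in-probability) sense of the convergence explicit is in fact more careful than the paper, which leaves the mode of convergence and the $M$-dependence of the right-hand side implicit.
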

\begin{proof}
    By \eqref{eq:penalties_emp}, when $\beta=0$ then the federated output in \eqref{eq:federated_outputs} becomes:
    \begin{equation*}
        y_{[0,T-1]}^{d,\mathrm{fed}}=\frac{1}{M}\sum_{i=0}^{M-1}y_{[0,T-1]}^{d,(i)}.
    \end{equation*} 
    Thanks to the assumption on the measurement noises, we can thus apply the law of large numbers, hence obtaining 
    \begin{equation*}
        y_{[0,T-1]}^{d,\mathrm{fed}}\underset{M\rightarrow \infty}{\longrightarrow} \bar{y}_{[0,T-1]}^{d,(0)}\!+\!\frac{1}{M}\sum_{i=1}^{M-1}\!\Delta y_{[0,T-1]}^{d,(i,0)}.
    \end{equation*}
    The result in \eqref{eq:bound_fed_asymp} then straightforwardly follows from the triangular in inequality and \eqref{eq:discrepancy_bounds}, concluding the proof. \hfill
\end{proof}
Hence, as the number of similar systems from which we can gather data increases, if these data are uniformly weighted, the only \textquotedblleft source of error\textquotedblright \ in reconstructing the noise-free trajectory of $\mathcal{S}_{\mathrm{o}}$ is given by the discrepancies between the latter and the (noiseless) outputs of the other systems.
\begin{remark}[Federated \emph{vs} averaged DeePC]
    Constructing the federated predictor with $\beta=0$ allows us to remove (in mean) the effect of measurement noise if we have enough (theoretically infinite) similar systems that we can draw data from. This result comes at the price of retaining an error induced by the system's dissimilarities with the nominal one. On the other hand, if one can repeat the \emph{same} experiment on the nominal system enough times, the effect of noise can again be counteracted (in mean) without retaining the dissimilarity-driven error. However, this experiment design choice entails that the nominal system cannot be used for purposes other than generating data, making this option less feasible from a practical perspective than collecting data from similar (yet not identical) systems. \hfill $\square$
\end{remark}

Let us now consider the dispersion of the federated output trajectory with respect to the nominal (noiseless) one, namely
\begin{equation}\label{eq:disperision_definition}
    \Omega\!=\!\mathbb{E}\!\left[\!\left(y_{[0,T-1]}^{d,\mathrm{fed}}\!-\bar{y}_{[0,T-1]}^{d,0}\right)\left(y_{[0,T-1]}^{d,\mathrm{fed}}\!-\bar{y}_{[0,T-1]}^{d,0}\right)^{\!\!\top}\right].
\end{equation}
By exploiting the definition of $y_{[0,T-1]}^{d,\mathrm{fed}}$ in \eqref{eq:federated_outputs} and the properties of the measurement noises (specifically, its independence) 
, it is straightforward to prove that \eqref{eq:disperision_definition} can be equivalently rewritten as
\begin{align}\label{eq:disperision_definition2}
     \nonumber \Omega=&\sum_{i=1}^{M-1}\!\!\left(\alpha^{(i)}\right)^{\!2}\Omega_{\Delta}^{(i)\!}+2\!\!\sum_{i=1}^{M-1}\sum_{\substack{j=1\\j\neq i}}^{M-1}\!\alpha^{(i)}\alpha^{(j)}\Omega_{\Delta}^{(i,j)}\\
     &+\sum_{i=0}^{M-1}\!\left(\alpha^{(i)}\right)^{\!2}\sigma_{i}^2 I,
\end{align}
with 
\begin{align}
    &\Omega_{\Delta}^{(i)}=\Delta y_{[0,T-1]}^{d,(i,0)}\left(\Delta y_{[0,T-1]}^{d,(i,0)}\right)^{\!\top},\label{eq:delta_i}\\
    &\Omega_{\Delta}^{(i,j)}=\Delta y_{[0,T-1]}^{d,(i,0)}\left(\Delta y_{[0,T-1]}^{d,(j,0)}\right)^{\!\top},\label{eq:delta_ij}
\end{align}
is the (deterministic) dissimilarity-induced dispersion, which is a $n_y \times n_y$ non-negative matrix. When looking at the dispersion of the federated output trajectories as well as the mean, the weights $\alpha^{(i)}$ should thus be chosen considering the quality of data as well as similarity
. Indeed, one should not only retain data from systems that are more similar to the nominal one, but also those that are less affected by noise. This result becomes even clearer when looking at the following proposition. 
\begin{proposition}
    Let $\rho_{\mathrm{max}}(\Omega_{\Delta}^{(i)})$ and $\rho_{\mathrm{max}}(\Omega_{\Delta}^{(i,j)})$  be the largest singular value of $\Omega_{\Delta}^{(i)}$ in \eqref{eq:delta_i} and $\Omega_{\Delta}^{(i,j)}$ in \eqref{eq:delta_ij}, respectively. Then, the dispersion $\Omega$ in \eqref{eq:disperision_definition} satisfies
    \begin{align}\label{eq:dispersion_bound}
        \nonumber \|\Omega\|_2\leq& \left(\alpha^{(0)}\right)^{\!2}\sigma_0^2+\!\!\sum_{i=1}^{M-1}\left(\alpha^{(i)}\right)^{\!2}\left[\rho_{\mathrm{max}}\left(\Omega_{\Delta}^{(i)}\right)\!+\sigma_i^2\right]\\
        & +\!\sum_{i=1}^{M-1}\sum_{\substack{j=1\\j\neq i}}^{M-1}\alpha^{(i)}\alpha^{(j)}\rho_{\mathrm{max}}\left(\Omega_{\Delta}^{(i,j)}\right).
    \end{align}    
\end{proposition}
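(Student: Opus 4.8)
The plan is to derive \eqref{eq:dispersion_bound} directly from the explicit form of the dispersion in \eqref{eq:disperision_definition2}, using only elementary properties of the induced $2$-norm: the triangle inequality, absolute homogeneity, and the identity between the induced $2$-norm of a matrix and its largest singular value. First I would apply the triangle inequality to the three groups of terms appearing in \eqref{eq:disperision_definition2}, namely the diagonal dissimilarity block $\sum_{i=1}^{M-1}(\alpha^{(i)})^{2}\Omega_{\Delta}^{(i)}$, the off-diagonal (cross) dissimilarity block, and the noise block $\sum_{i=0}^{M-1}(\alpha^{(i)})^{2}\sigma_i^2 I$, so that $\|\Omega\|_2$ is bounded by the sum of the norms of these three pieces. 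Since every scalar factor involved, i.e. $(\alpha^{(i)})^{2}$, $\alpha^{(i)}\alpha^{(j)}$ and $\sigma_i^2$, is nonnegative, homogeneity lets me pull each of them outside the corresponding norm without introducing absolute values.

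Next I would evaluate each elementary norm. For the noise block, $\|\sigma_i^2 I\|_2=\sigma_i^2$ since $\|I\|_2=1$, and peeling off the $i=0$ summand isolates the term $(\alpha^{(0)})^{2}\sigma_0^2$ that appears on its own in \eqref{eq:dispersion_bound}. For the diagonal dissimilarity block, $\|\Omega_{\Delta}^{(i)}\|_2=\rho_{\mathrm{max}}(\Omega_{\Delta}^{(i)})$ by the definition of the induced $2$-norm, and merging these with the remaining noise terms for $i\geq 1$ produces the bracket $\rho_{\mathrm{max}}(\Omega_{\Delta}^{(i)})+\sigma_i^2$ weighted by $(\alpha^{(i)})^{2}$. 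At this stage the first two summations of \eqref{eq:dispersion_bound} are already in place, and the argument reduces to controlling the cross block.

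The delicate step, which I expect to demand the most care, is precisely the treatment of the off-diagonal terms. Here $\Omega_{\Delta}^{(i,j)}=\Delta y_{[0,T-1]}^{d,(i,0)}(\Delta y_{[0,T-1]}^{d,(j,0)})^{\top}$ is a rank-one outer product that is in general \emph{not} symmetric, so it must be bounded through its largest \emph{singular} value $\rho_{\mathrm{max}}(\Omega_{\Delta}^{(i,j)})$ rather than through an eigenvalue; this is exactly what the induced $2$-norm provides, giving $\|\Omega_{\Delta}^{(i,j)}\|_2=\rho_{\mathrm{max}}(\Omega_{\Delta}^{(i,j)})$. The second subtlety is bookkeeping: the cross contribution naturally groups each $\Omega_{\Delta}^{(i,j)}$ with its transpose $\Omega_{\Delta}^{(j,i)}$ into the symmetric object $\sum_{i<j}\alpha^{(i)}\alpha^{(j)}(\Omega_{\Delta}^{(i,j)}+\Omega_{\Delta}^{(j,i)})$, and invoking $\rho_{\mathrm{max}}(\Omega_{\Delta}^{(i,j)})=\rho_{\mathrm{max}}(\Omega_{\Delta}^{(j,i)})$ reconciles the factor $2$ in \eqref{eq:disperision_definition2} with the single, unfactored double sum $\sum_{i=1}^{M-1}\sum_{j\neq i}^{M-1}\alpha^{(i)}\alpha^{(j)}\rho_{\mathrm{max}}(\Omega_{\Delta}^{(i,j)})$ over ordered pairs that appears in the claim. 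Collecting the three bounded blocks then assembles \eqref{eq:dispersion_bound}, and I would close the proof by remarking that no structural assumption beyond nonnegativity of the weights and variances is needed.
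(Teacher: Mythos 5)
Your overall route---triangle inequality applied blockwise to \eqref{eq:disperision_definition2}, absolute homogeneity for the nonnegative scalars, $\|I\|_2=1$, and the identification of the induced $2$-norm with the largest singular value---is exactly the argument the paper compresses into its one-sentence proof, and your handling of the diagonal dissimilarity block and the noise block is correct. The gap is precisely in the step you yourself flag as delicate: the transpose-pairing trick does not do what you claim. Since the inner sum in \eqref{eq:disperision_definition2} already runs over \emph{ordered} pairs $j\neq i$, grouping each term with its transpose only rewrites the cross block as $2\sum_{i<j}\alpha^{(i)}\alpha^{(j)}\bigl(\Omega_{\Delta}^{(i,j)}+\Omega_{\Delta}^{(j,i)}\bigr)$, and for the rank-two symmetric matrix $ab^{\top}+ba^{\top}$ (with $a=\Delta y_{[0,T-1]}^{d,(i,0)}$, $b=\Delta y_{[0,T-1]}^{d,(j,0)}$) the spectral norm equals $|a^{\top}b|+\|a\|_2\|b\|_2\leq 2\rho_{\mathrm{max}}\bigl(\Omega_{\Delta}^{(i,j)}\bigr)$, with equality when $a$ and $b$ are parallel. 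Substituting this back yields $2\sum_{i\neq j}\alpha^{(i)}\alpha^{(j)}\rho_{\mathrm{max}}\bigl(\Omega_{\Delta}^{(i,j)}\bigr)$, i.e., \emph{twice} the cross term appearing in \eqref{eq:dispersion_bound}; invoking $\rho_{\mathrm{max}}\bigl(\Omega_{\Delta}^{(i,j)}\bigr)=\rho_{\mathrm{max}}\bigl(\Omega_{\Delta}^{(j,i)}\bigr)$ cannot absorb the prefactor $2$, and no regrouping can, since the bound is tight in the parallel case.

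The resolution lies upstream of the norm estimates: the factor $2$ in \eqref{eq:disperision_definition2}, taken together with an inner sum over all $j\neq i$, double-counts. Expanding the outer product of $\sum_{i=1}^{M-1}\alpha^{(i)}\Delta y_{[0,T-1]}^{d,(i,0)}$ from \eqref{eq:federated_outputs} with itself produces each ordered pair $(i,j)$ with $i\neq j$ exactly once, so the cross block of $\Omega$ is $\sum_{i\neq j}\alpha^{(i)}\alpha^{(j)}\Omega_{\Delta}^{(i,j)}$ with no prefactor (the transposed summand $\Omega_{\Delta}^{(j,i)}$ is already counted separately; the factor $2$ would be correct only if the inner sum ran over $j>i$ with a symmetrized summand). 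Once the expansion is written in this form, your triangle-inequality machinery delivers \eqref{eq:dispersion_bound} verbatim, with no reconciliation needed---which is evidently what the paper's terse proof implicitly does. As written, your argument should either correct the expansion before bounding, or concede that taking \eqref{eq:disperision_definition2} at face value forces a factor $2$ on the double sum in \eqref{eq:dispersion_bound}; presenting the singular-value identity as the bridge between the two is not valid algebra.
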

\begin{proof}
   The proof straightforwardly follows from the definition of $\Omega$ in \eqref{eq:disperision_definition2}, the recursive application of the triangular inequality, as well as the properties of the induced 2-norm of a matrix. 
\end{proof}

The result in \eqref{eq:dispersion_bound} ultimately allows us to detect in which situation using the federated output could lead to a dispersion around $\bar{y}_{[0,T-1]}^{d,0}$ that is smaller than the one attained by using the trajectory from the nominal system only (i.e., $\|\Omega\|_{2}=\sigma_0^2$). Indeed, when $\alpha^{(0)} \in [0,1)$, and the following holds
\begin{align}\label{eq:bound_variance1}
    \nonumber &\sum_{i=1}^{M-1}\tilde{\alpha}^{(i)\!\!}\left[\rho_{\mathrm{max}}\left(\!\Omega_{\Delta}^{(i)}\!\right)\!+\!\sigma_i^2\right]\\
    &\qquad \!+2\!\sum_{i=1}^{M-1}\sum_{\substack{j=1\\j\neq i}}^{M-1}\tilde{\alpha}^{(i,j)}\rho_{\mathrm{max}}\!\left(\!\Omega_{\Delta}^{(i,j)}\!\right)\!<\!\sigma_0^2,
\end{align}
with 
\begin{equation*}
    \tilde{\alpha}^{(i)}=\frac{\left(\alpha^{(i)}\right)^2}{1-\left(\alpha^{(0)}\right)^2},~~~~\tilde{\alpha}^{(i,j)}=2\frac{\alpha^{(i)}\alpha^{(j)}}{1-\left(\alpha^{(0)}\right)^2},
\end{equation*}
then $\|\Omega\|<\sigma_0^2$. This condition ultimately implies that using the federated scheme can be advantageous when data are collected from systems that are actually similar, with their measurement noise eventually being less severe than that corrupting the nominal system. In the specific case in which the measurement noises are identically distributed, \eqref{eq:bound_variance1} translates into: 
\textcolor{black}{\small
\begin{align}\label{eq:bound_variance2}
    &\sum_{i=1}^{M-1}\hat{\alpha}^{(i)\!}\rho_{\mathrm{max}}\left(\Omega_{\Delta}^{(i)}\right)\!+\!\! \sum_{i=1}^{M-1} \!\sum_{\substack{j=1\\j \neq i}}^{M-1}\hat{\alpha}^{(i,j)\!}\rho_{\mathrm{max}}\left(\Omega_{\Delta}^{(i,j)}\right) \!\!<\!\sigma_0^2,
    \normalsize
\end{align} }
with
\begin{equation*}
    \hat{\alpha}^{(i)}=\frac{\left(\alpha^{(i)}\right)^{\!2}}{1-\sum_{i=0}^{M-1}\left(\alpha^{(i)}\right)^{\!2}},~~~~\hat{\alpha}^{(i,j)}=2\frac{\alpha^{(i)}\alpha^{(j)}}{1-\sum_{i=0}^{M-1}\left(\alpha^{(i)}\right)^{\!2}},
\end{equation*}
thus indicating the level of dissimilarity that guarantees a smaller dispersion $\Omega$ with respect to that achieved by using the trajectories from the nominal system only. Note that, if $\alpha^{(0)}=1$, then the federated scheme corresponds back to the standard DeePC.

\section{Numerical case study}\label{sec:examples}
We now evaluate the performance of the proposed formulation for federated DeePC\footnote{\textcolor{black}{The MatLab Source code for generating these results can be found in \texttt{www.github.com/PichiControlEngineering/F-DeePC}}} by considering the nominal system with dynamics described by the following matrices (see \eqref{eq:dynamics})
\begin{equation}\label{eq:nominal_example}
\begin{aligned}
    &A_{\mathrm{o}}=\begin{bmatrix}
        0.7326 & -0.0891\\
        0.1722 & 0.9909
    \end{bmatrix},~~B_{\mathrm{o}}=\begin{bmatrix}
        0.0609\\
        0.0064
    \end{bmatrix},\\
    &C_{\mathrm{o}}=\begin{bmatrix}
        0 & 1
    \end{bmatrix},~~D_{\mathrm{o}}=0.
\end{aligned}
\end{equation}
The system is excited with a zero-mean white noise input with variance $1$ for $T=50$ steps, starting from $x_0= {\left[
    1 \; 1 \right]}^{\top}$, to get the output trajectories used to construct the data-driven predictor in \eqref{eq:prediction_model}. The same conditions are used to generate the other subset composing the \textquotedblleft federated\textquotedblright \ dataset $\mathcal{D}$ by considering additional $M-1$ systems sharing the same state-space matrices as \eqref{eq:nominal_example} but the state transition matrix, with the latter given by
\begin{equation}\label{eq:uncertain_A}
    A^{(j)\!}\!=\!A_{\mathrm{0}}\!+\!0.05\left(2\frac{j\!-\!1}{M\!-\!1}-1\right)\Delta A, ~~ j\!=\!1,\ldots,M\!-1.
\end{equation}
with $\Delta A$ defined such that these systems can be regarded as similar to the nominal one according to Definition~\ref{def:similarity}. In line with the setup introduced in Section~\ref{sec:setting}, the output sequences collected from the systems are corrupted by zero-mean white noise with different variances chosen for the Signal-to-Noise Ratio (SNR) of all outputs to be $20$~dB. Closed-loop simulations are instead not affected by noise toward being able to analyze the impact of using the federated approach without it being masked by additional noise.

Within this setting, we design an (unconstrained) federated DeePC scheme presented in Section~\ref{sec:federated_DeePC}, as well as the nominal DeePC (i.e., solving \eqref{eq:standard_DeePC} with the data from the nominal plant) to regulate the nominal system to zero, by setting $Q=I$, $R=0.01$ and $N=T_{\mathrm{ini}}=3$. This last choice guarantees that $T_{\mathrm{ini}}$ is greater than the system lag, while not becoming larger than the chosen prediction horizon, as well as simultaneously containing computation times required to solve \eqref{eq:standard_DeePC}. Performance is assessed over $150$ datasets obtained via Monte Carlo simulations (with new realizations of inputs and noise used for the batch data generation), considering noiseless closed-loop simulation and fixing $\beta=0.1$ in \eqref{eq:euristic_choice}

\paragraph*{Performance assessment} 
\begin{figure}[tb]
    \centering
    \begin{tabular}{c}
    \subfigure[]{
    \includegraphics[width=0.8\linewidth, trim={0 0 0 0},clip]{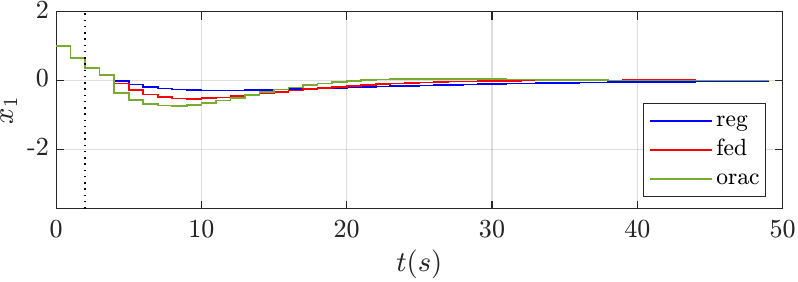}} \\
    \subfigure[]{
    \includegraphics[width=0.8\linewidth]{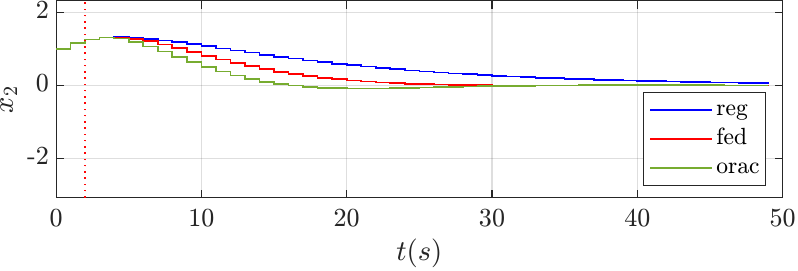}} \\
    \subfigure[]{
    \includegraphics[width=0.8\linewidth]{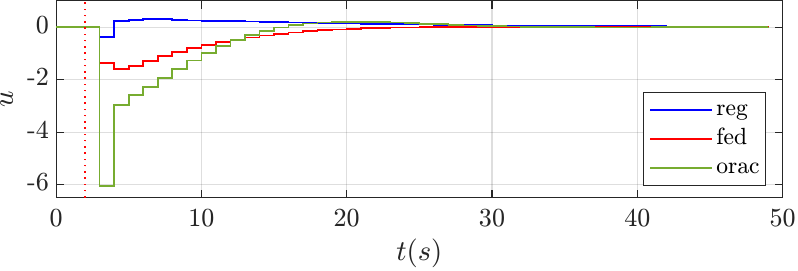}}
    \end{tabular}\vspace{-.2cm}
    \caption{An example trajectory highlighting performance of the system in \eqref{eq:nominal_example} with the 3 different controllers. The federated dataset consists of $M=55$ systems perturbed by $\Delta A_2$. For each controller, $\lambda_g$ is chosen such that it minimizes the mean ${RMSE}_y$}
\label{fig:example_trajectory}
\end{figure}
In evaluating the performance of the federated DeePC scheme presented in Section~\ref{sec:federated_DeePC}, we compare it against the nominal DeePC scheme (i.e., solving \eqref{eq:standard_DeePC} with the data from the nominal plant) and what we refer to as the \textquotedblleft oracle\textquotedblright, namely the DeePC controller designed by relying on noiseless data collected from \eqref{eq:nominal_example} in the same setting described above, setting $\lambda_g=0$ in \eqref{eq:standard_DeePC}. Note that the predictor used in the oracle DeePC thus corresponds to the actual prediction model of the LTI system (see \cite{coulson2019data}). In particular, from a quantitative standpoint, we focus on comparing the performance of the nominal and the federated DeePC scheme against the oracle by computing the following indicators:
\begin{equation}\label{eq:performance_indexes}
    \begin{aligned}
        & \mathrm{RMSE}_\zeta=\sqrt{\frac{1}{T_{\mathrm{sim}}}\!\!\sum_{t=0}^{T_{\mathrm{sim}-1}}(\zeta_t-\zeta_t^{\mathrm{o}})^2},\\
        & \mathrm{RMSE}_\zeta^{\mathrm{fed}}=\sqrt{\frac{1}{T_{\mathrm{sim}}}\!\!\sum_{t=0}^{T_{\mathrm{sim}-1}}(\zeta_t^{\mathrm{fed}}-\zeta_t^{\mathrm{o}})^2},
    \end{aligned}
\end{equation}
where $\zeta_t$, $\zeta_{t}^{\mathrm{fed}}$ and $\zeta^{\mathrm{o}}_t$ are placeholders for either the inputs or the outputs attained in closed-loop by using the standard DeePC scheme, the federated one, or the oracle, respectively, for $t=0,\ldots,T_{\mathrm{sim}}-1$, while $T_{\mathrm{sim}}=50$ is the considered number of simulation steps. 

\subsection{Performance with different perturbations}
We first consider two possible options for the uncertainty $\Delta A$ in \eqref{eq:uncertain_A}, namely
\begin{equation*}
    \Delta A_1=\begin{bmatrix}
        0 & 1\\
        -1 & 0
    \end{bmatrix},~~\Delta A_2=I,
\end{equation*}
evaluating the performance of the standard DeePC algorithm against the federated one for different values of $\lambda_g$ when $M=55$. An example trajectory for a control sequence obtained in this setting is shown in 
\figurename{~\ref{fig:example_trajectory}}
.

\begin{figure*}[tb!]
    \centering
    \begin{tabular}{ccc}
        \includegraphics[scale=.4]{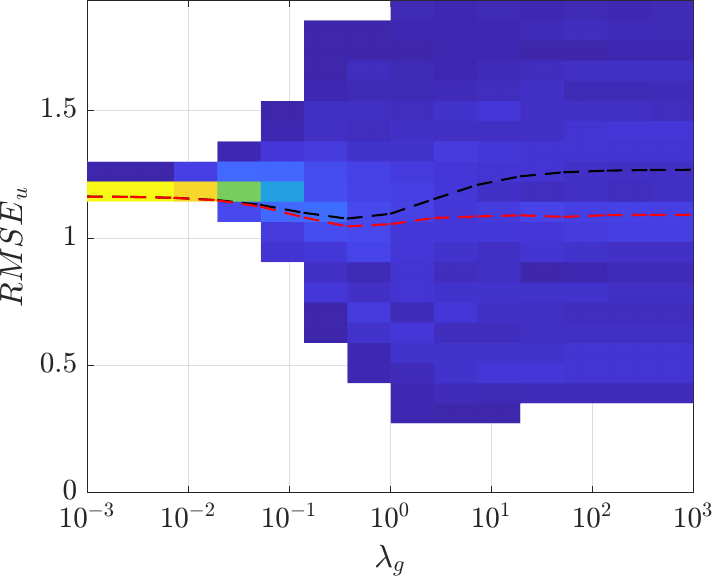}  & \includegraphics[trim={0 0 2.1cm 0},clip,scale=0.4]{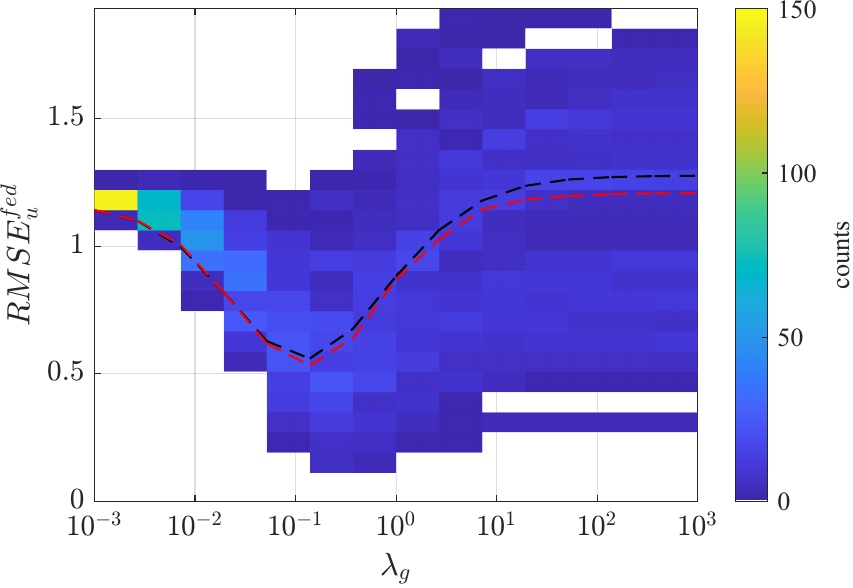} & \includegraphics[scale=.4]{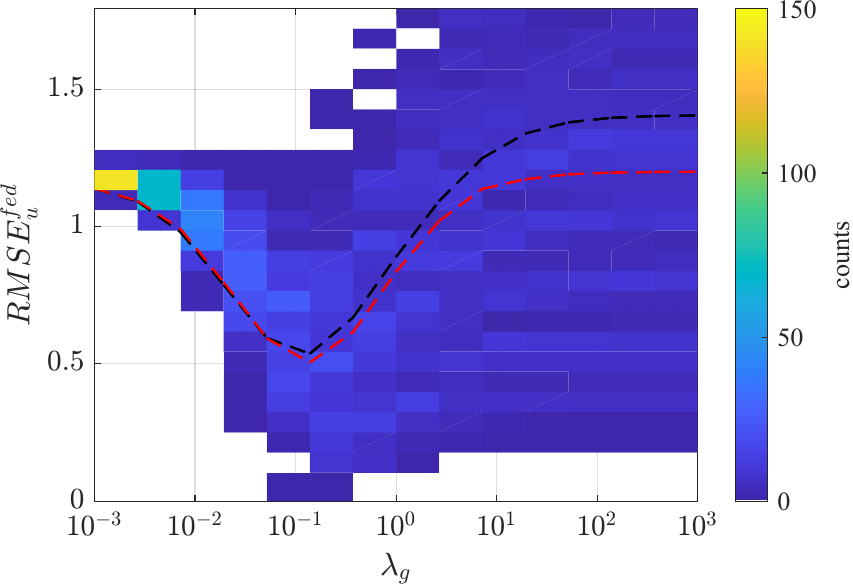}\\
        \subfigure[Standard DeePC]{\includegraphics[scale=.4]{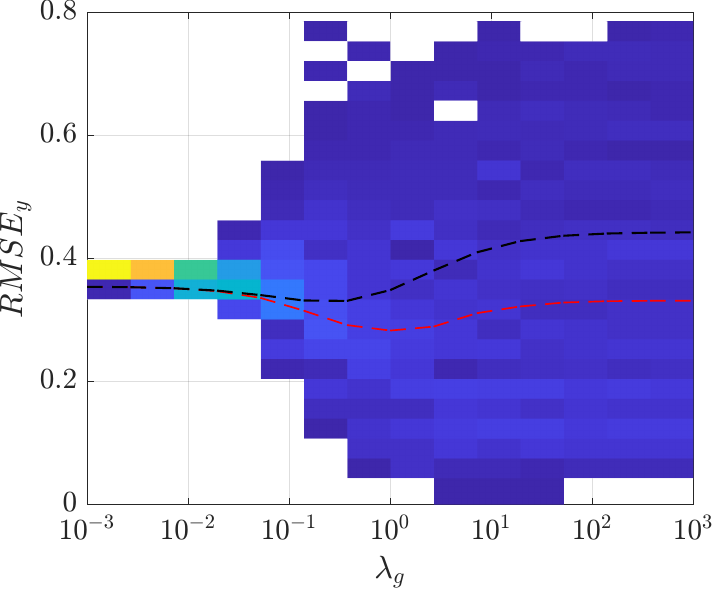}} & \subfigure[Federated DeePC with $\Delta A_1$]{\includegraphics[trim={0 0 2.1cm 0},clip,scale=.4]{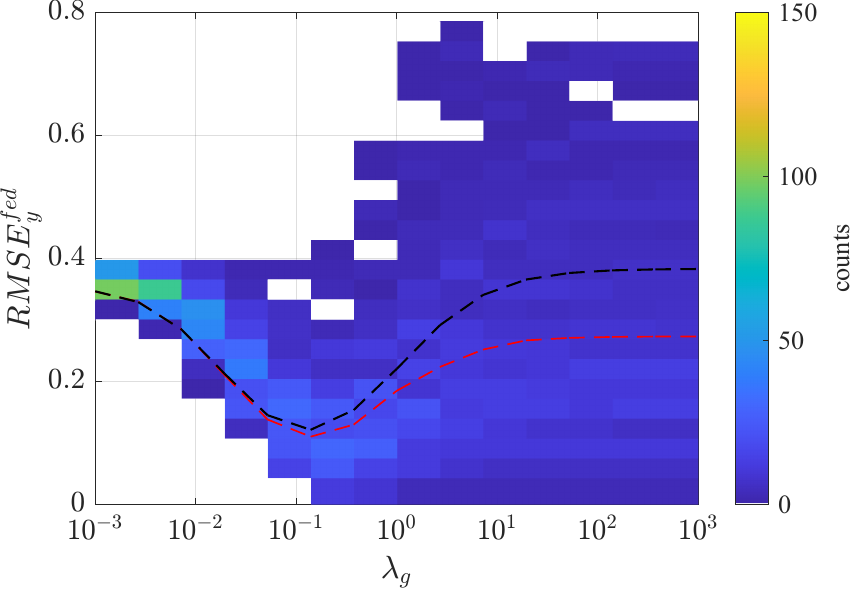}} & \subfigure[Federated DeePC with $\Delta A_2$]{\includegraphics[scale=.4]{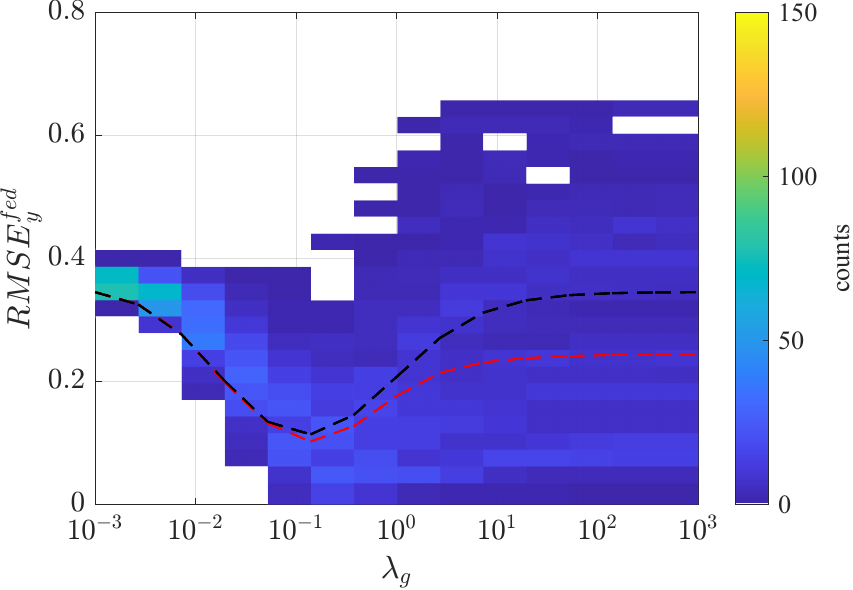}}
    \end{tabular}
    \vspace{-.2cm}\caption{Histograms of the values taken by the performance indicators in \eqref{eq:performance_indexes} for different values of $\lambda_g$ over $150$ Monte Carlo simulation, with mean (dashed black) and median (dashed red) of the achieved RMSEs.}
    \label{fig:rms_compare}
\end{figure*}

As shown in \figurename{~\ref{fig:rms_compare}}, the performance of standard DeePC and the two considered federated cases are comparable for the lower values of $\lambda_g$ tested. Nonetheless, by using the federated approach, one could better identify the value of $\lambda_g$ leading to the best performance in terms of RMSEs, which are visibly better than the ones attained with standard DeePC. When focusing on the case $\Delta A=\Delta A_1$ in \eqref{eq:uncertain_A} and looking at the median output RMSEs, the performance of the federated DeePC scheme tends to remain superior to those achieved with standard DeePC. For high regularization values, both controllers tend to generate zero inputs. Meanwhile, when looking at $\Delta A=\Delta A_2$, the median and mean output RMSEs tend to be more coherent, with federated DeePC performing better or comparably to standard DeePC in mean over the whole spectrum of tested $\lambda_g$, but leading to a slightly worse result than the standard DeePC in mean.

These results highlight that, at least in the considered case studies, using the federated DeePC scheme is advantageous both in terms of performance and in easing the detection of the optimal penalty $\lambda_g$. At the same time, this result implies that the federated approach is, in this case, more sensitive to the choice of $\lambda_g$, especially for higher values of such a hyperparameter when $\Delta A=\Delta A_1$. Therefore, the federated approach requires more careful tuning of $\lambda_g$, in this case avoiding excessive regularization and, hence, reduction in the magnitude of the optimization variable $g$. 
It is worth remarking that he \textquotedblleft federated\textquotedblright \ RMSEs is overall lower than the one attained with the standard DeePC scheme but eventually features more prominent outliers.

\subsection{The impact of increasing amount of accessible systems}
\begin{figure}[!tb]
    \centering
\includegraphics[scale=0.65]{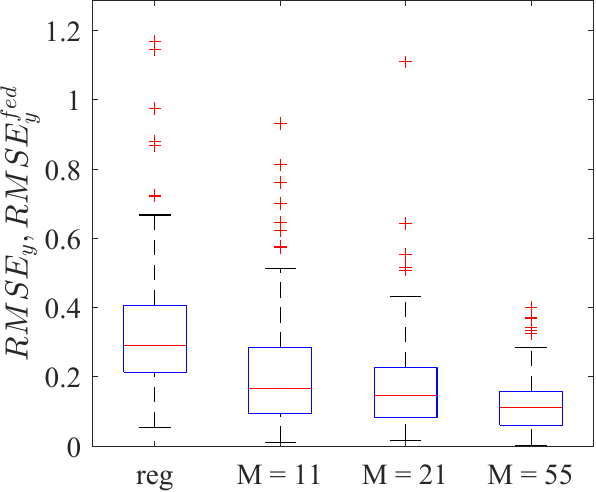}\vspace{-0.2cm}
    \caption{Box-plot of the RMSE$_y$ and RMSE$_y^{\mathrm{fed}}$ for the standard DeePC scheme (reg) and the federated one for increasing $M$ over 150 Monte Carlo simulations.}
    \label{fig:boxPlot}
\end{figure}
\begin{figure}[!tb]
    \centering
    \includegraphics[scale=.65]{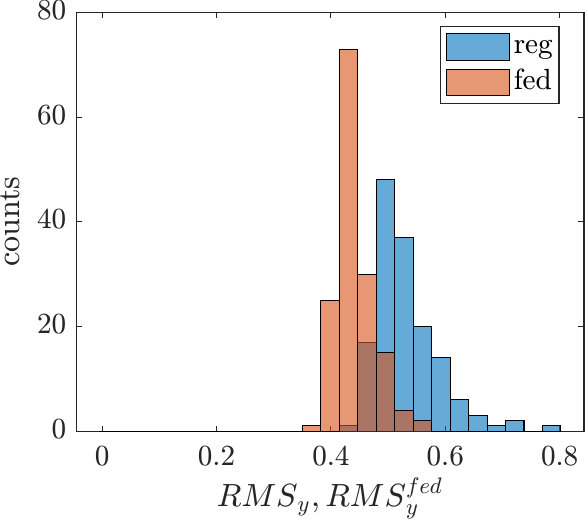}
    \caption{Distribution of the output RMS for standard DeePC (reg) with the (optimal) $\lambda_g=0.37$ and federated DeePC with (optimal) $\lambda_g=0.14$ and $M=55$ over 150 Monte Carlo simulations.}
    \label{fig:absoluteRMS}
\end{figure}
By focusing on $\Delta A=\Delta A_1$ and considering the values of $\lambda_g$ leading to the minimum RMSEs only, we now analyze the impact of increasing the number of similar systems available to construct the federated predictor in \eqref{eq:federated_predictor}. As is clear from \figurename{~\ref{fig:boxPlot}}, in the considered setting, increasing the number of similar systems used to construct the predictor allows the achievement of closed-loop outputs that are progressively closer to those attained with the oracle control scheme. Moreover, increasing $M$ also reduces outliers in the output RMSE, showing the possible benefits of using the federated approach in terms of dispersion of performance around the median error with respect to the oracle. A similar conclusion can be drawn when specifically looking at the absolute root mean square (RMS) tracking error, i.e., 
\begin{equation}\label{eq:performance_indexes2}
    \begin{aligned}
        & \mathrm{RMS}_y=\sqrt{\frac{1}{T_{\mathrm{sim}}}\!\!\sum_{t=0}^{T_{\mathrm{sim}-1}}(y_t-y_t^{r})^2},\\
        & \mathrm{RMS}_y^{\mathrm{fed}}=\sqrt{\frac{1}{T_{\mathrm{sim}}}\!\!\sum_{t=0}^{T_{\mathrm{sim}-1}}(y_t^{\mathrm{fed}}-y_t^{r})^2},
    \end{aligned}
\end{equation}
for the case $M=55$ (see \figurename{~\ref{fig:absoluteRMS}}). Indeed, the distribution of the output RMSEs achieved with the federated scheme is centered around values lower than those attained with the standard DeePC approach, showing that the federated scheme allows us not only to attain closer oracle tracking, but also improve overall performance. We wish to stress that these results do not imply that one gets to the oracle solution when $M\rightarrow \infty$, which is only possible if all the systems are exactly equal and the associated weights in the federated predictor \eqref{eq:federated_predictor} are uniformly chosen (see~\eqref{eq:bound_fed_asymp}).

\section{Conclusions}\label{sec:conclusions}
In this paper, we have presented a scheme for the design of a federated data-driven predictive controller. We have proposed a preliminary approach to integrating data collected from similar systems into the standard DeePC framework, analyzing the impact that our choice has on the data matrices used to construct the data-driven predictor. Based on our integration choices, we show that these additional data might be advantageous in reducing the dispersion around the noiseless output of the nominal plant to be controlled while introducing a structural bias due to the plants' dissimilarities, which can be compensated for by a proper choice of the weights characterizing our data integration strategy. Our numerical results show the possible advantages and drawbacks of exploiting the proposed federated scheme.

Future work will be devoted to refining our notion of similarity and, consequently, improving our federated strategy, as well as exploring alternative approaches to leverage data from multiple (similar) systems for data-driven predictive control, e.g., the similarity measure introduced in \cite{padoan2022}.

\bibliography{toward_federatedDeePC.bib}

\end{document}